\newtheorem{thm}{Theorem}
\newtheorem{cor}[thm]{Corollary}
\newtheorem{lem}[thm]{Lemma}
\newenvironment{lemma}{\bf\begin{lem}\rm\em}{\end{lem}} 
\newtheorem{prop}[thm]{Proposition}
\newenvironment{proposition}{\bf\begin{prop}\rm\em}{\end{prop}} 
\newtheorem{rem}[thm]{Remark}
\newenvironment{remark}{\bf\begin{rem}\rm}{\end{rem}} 
\newcommand{\SINR}{\mathrm{SINR}}
\newcommand{\Pro}{\mathbb{P}}
\newcommand{\Ex}{\mathbb{E}}
\newcommand{\calS}{\mathcal{S}}
\newcommand{\barN}{\bar N}
\newcommand{\ind}{\mathbf{1}\!}
\newcommand\exclude[1]{}
\begin{document}

\title{\huge Performance laws of large heterogeneous cellular networks}
\author{\IEEEauthorblockN{Bart{\l }omiej~B{\l}aszczyszyn\IEEEauthorrefmark{1},
Miodrag Jovanovic\IEEEauthorrefmark{2}\IEEEauthorrefmark{1} and
Mohamed Kadhem Karray\IEEEauthorrefmark{2}}}
\maketitle

\let\thefootnote\relax\footnote{\IEEEauthorrefmark{1}INRIA-ENS,
23 Avenue d'Italie, 75214  Paris, France
Email: Bartek.Blaszczyszyn@ens.fr\\
\indent\IEEEauthorrefmark{2}Orange Labs;
38/40 rue G\'{e}n\'{e}ral Leclerc, 92794  
Issy-les-Moulineaux, France
Email: \{miodrag.jovanovic,\,mohamed.karray\}@orange.com
}

\newcommand{\thefootnote}{\arabic{footnote}}
\addtocounter{footnote}{-1}

\begin{abstract}
We propose a model for heterogeneous cellular networks assuming a space-time Poisson process of call arrivals, independently marked by data volumes, and served by different types of base stations (having different transmission powers) represented by the superposition of independent Poisson processes on the plane.  Each station applies a processor sharing policy to serve users arriving in its vicinity, modeled by the Voronoi cell perturbed by some random signal propagation effects (shadowing). Users' peak service rates depend on their signal-to-interference-and-noise ratios (SINR) with respect to the serving station. The mutual-dependence of the cells (due to the extra-cell interference) is captured via some system of cell-load equations impacting the spatial distribution of the SINR. We use this model to study in a semi-analytic way (involving only static simulations, with the temporal evolution handled by the queuing theoretic results) network performance metrics (cell loads, mean number of users) and the quality of service perceived by the users (mean throughput) served by different types of base stations.  Our goal is to identify macroscopic laws regarding these performance metrics, involving averaging both over time and the network geometry.  The reveled laws are validated against real field measurement in an operational network.
\end{abstract}

\begin{keywords}
Het-Nets; traffic demand; user-throughput; cell-load;  processor sharing; Little's law;
Poisson point process; typical-cell; queuing theory; Palm theory; measurements
\end{keywords}

\section{Introduction}
Wireless cellular networks are constantly evolving to cope with the accelerating increase
of the traffic demand. The technology progressed  from 3G enhancement with HSDPA
(High-Speed Downlink Packet Access) to 4G with LTE (Long Term Evolution). The
networks become also more dense and more heterogeneous;  i.e. new base
stations (BS) of different types are added. In particular,  
operators introduce \emph{micro} BS, which transmit with smaller
powers  than the original ones (called \emph{macro} BS) in order to cope with local
increase of the traffic demand  (hotspots). The reasons for using smaller
transmitting powers is to avoid a harmful increase of interference
and reduce energy consumption as well as human exposure to the electromagnetic radiation. 
The deployment of micro BS is expected to increase significantly in
the nearest future.

Usage of different tiers of  BS (as the micro and  macro stations)  with variable transmission powers as well as
antenna gains, height etc, makes cellular networks heterogeneous.
Besides, even the macro tiers in commercial cellular networks are never perfectly regular: 
the locations  of  BS is usually far from being perfectly hexagonal,
because of various deployment constraints. Irregularity of the spatial patterns of BS
is usually more pronounced  in dense urban environments. Physical
irregularity of the urban environment (shadowing) induces  additional
variability of radio conditions.
Irregularity and heterogeneity  of cellular networks implies a spatial
disparity of base station performance metrics and quality of service (QoS) parameters observed by users 
in different cells of the network.
This  represents a challenge 
for the network operators, in particular in the context of the network dimensioning.
How to describe and analyze the performance of a large, irregular, heterogeneous
network? Which tier in the given network disposes larger
capacity margins? Is it the macro tier since its BS transmit with
larger powers  or the micro tier, whose BS  serve  smaller zones?
The goal of this paper is to propose a model, validated  with
respect to real field measurements in an operational network,
which can help answering  these questions.

Our objective  faces us with the  following important aspects of the 
modeling problem: 
(i) capturing the static but irregular and
heterogeneous network geometry, (ii) considering  the dynamic user service process 
at  individual network BS (cells), and last but not least 
(iii) taking into account the dependence between these service
processes. This latter dependence is  
due to the fact that the extra-cell interference makes the service of a
given cell depend on the ``activity'' of other cells in the network.
Historically, geometric  (i) and dynamic (ii) aspects are usually
addressed separately on the ground of stochastic geometry and queueing
theory, respectively.

Cellular network models based on the planar Poisson point process
have been shown recently to give tractable expressions for
many characteristics built from the powers of different BS received at one
given  location, as e.g. the signal-to-interference-and
noise ratio(s) (SINR) of the, so-called, typical user. They describe
potential resources of the network (peak bit-rates, spectral or energy
efficiency etc) but not yet its real performance when several  users
have to share these resources. On the other hand,
various classical queueing models can be tailored to represent the
dynamic resource sharing at one or several BS (as e.g. loss models for constant
bit-rates services and processor sharing queues for variable bit-rates
services).

Our model considered in this paper combines the stochastic-geometric
approach with the  queueing one to represent the network in its spacial
irregularity and temporal evolution.
It assumes the usual {\em multi-tier Poisson model for BS locations} with
shadowing and the {\em space-time Poisson process of call arrivals} independently marked by
data volumes.  Each station applies a processor sharing policy to serve
users which receive its signal as the strongest one, with the 
peak service rates depending on the respective SINR.
The mutual-dependence of cell performance (iii)
is captured  via a system of {\em cell-load (fixed point) equations}.
By the load we mean the ratio of the
actual traffic demand  to its critical value, which can be  interpreted, when 
it is smaller than one, as the busy probability in the classical
processor sharing queue.
The cell load equations make the load of  a given station  dependent
on the busy probabilities (hence loads) of other stations, by taking
them as weighting factors of the interference induced by these
stations. Given network realization, this decouples the temporal (processor-sharing) queueing processes  of
different cells, allowing us  to use the classical results to
evaluate their steady state characteristics (which depend on the
network geometry).
We (numerically) solve the cell-load  fixed point problem  calculating loads and
other characteristics of the individual cells. Appropriate spatial (network)
averaging of these characteristics, expressed  using the  formalism of
the typical cell offered by  Palm theory of point processes, 
provides useful macroscopic description of the network performance. 

The above approach is  validated by estimating the model parameters from the real field measurements 
of a given operational network and comparing the macroscopic network performance characteristics
calculated using this model to the performance of the real network.

{\em The remaining part of the paper} is organized as follows: 
In Section~\ref{ss.RelatedWork} we briefly present the related work. 
Our model is introduced in Section~\ref{s.ModelDescription}
and studied in Section~\ref{s.Analysis}. Numerical results 
validating our approach are  presented in Section~\ref{s.NumericalResults}.

\subsection{Related work}
\label{ss.RelatedWork}
There are several ``pure''  simulation tools developed for the performance  evaluation of 
cellular networks such as those
developed by the industrial contributors to 3GPP (\emph{3rd Generation
Partnership Project})~\cite{3GPP36814-900}, 
TelematicsLab LTE-Sim~\cite{Piro2011}, University of Vien LTE
simulator~\cite{Mehlfuhrer2011,Simko2012} and LENA
tool~\cite{Baldo2011,Baldo2012} of CTTC. 
They do not unnecessarily allow to identify the  macroscopic laws regarding network performance metrics.

A possible analytical approach to this problem is based on the information
theoretic 
characterization of the individual link performance; cf
e.g.~\cite{GoldsmithChua1997,Mogensen2007}, in conjunction with a queueing
theoretic 
modeling and analysis of the user traffic
cf. e.g.~\cite{Borst2003,BonaldProutiere2003,HegdeAltman2003,BonaldBorstHegdeJP2009,RongElayoubiHaddada2011,KarrayJovanovic2013Load}%
. These works are usually focused on some particular aspects of the
network and do not consider a large, irregular, heterogeneous,  multi-cell scenario. 

Stochastic geometric
approach~\cite{HABDF:2009} to wireless communication networks consist in taking 
spatial averages over node (emitter, receiver) locations.
It was first shown in~\cite{ANDREWS2011} to give analytically tractable
expressions for the typical-user characteristics in Poisson 
models of cellular networks, 
with the Poisson assumption
being  justified  by representing highly irregular base station deployments in urban
areas~\cite{ChiaHan_etal2012}  or mimicking strong log-normal
shadowing~\cite{hextopoi,hextopoi-journal}, or  both. Expressions for the SINR coverage in 
multi-tier network models were
developed in~\cite{mukherjee2011downlink,DHILLON2012,mukherjee2012downlink,BlaszczyszynKK2013SINR}.
Several extensions of this initial model are reported in~\cite{mukherjee2014analytical}.
The concept of equivalence of heterogeneous networks (from the point
of view of its typical user), which we use in the present paper, was
recently formulated in~\cite{equivalence2013}, but previously  used 
in several works e.g. in~\cite{blaszczyszyn2010impact,MADBROWN2011,PINTO2012,BlaszczyszynKK2013SINR}.
 
The fixed-point cell-load equation was postulated independently in
~\cite{KarrayJovanovic2013Load} and~\cite{siomina2012analysis}
to capture the dependence of processor sharing queues modeling
performance of individual BS, in the context of regular hexagonal and
fixed deterministic network models, respectively.
Our present paper, combining stochastic geometry with queueing
theory complements~\cite{blaszczyszyn2014user}, where a homogeneous
network is considered, and~\cite{JovaQoS} where the distribution of the QoS
metrics in the heterogeneous network has been studied by simulation.
A network dimensioning  methodology based on this approach was recently proposed in~\cite{dimension}.

\section{Model description}
In this section we describe the components of our model.
\label{s.ModelDescription}

\subsection{Network geometry}
\subsubsection{Multi-tier network of BS} 
We consider a multi-tier cellular network consisting of $J$ types
(tiers) of BS characterized by different transmitting
powers $P_j$, $j=1,\ldots,J$. 
Locations of BS are modeled by independent homogeneous Poisson point processes
$\Phi_j$ on
the plane, of intensity $\lambda_j$ stations per $\mathrm{km}^2$.
Let $\Phi=\{X_n\}$ be the superposition of $\Phi_{1},\ldots,\Phi_J$
(capturing the locations of all BS of the network). 
 Denote by $Z_n\in\{1,\ldots,J\}$ the type of BS $X_n\in\Phi$
(i.e., the index of the tier it belongs to). 
It is known that $\Phi$ is a Poisson point process of intensity parameter $\lambda=\sum_{j=1}^J \lambda
_{j}$ and $Z_n$ form independent, identically distributed  (i.i.d) marks of
 $\Phi$ with $\Pro(Z_n=j)=\lambda_j/\lambda$.

\subsubsection{Propagation effects}
\label{sss.Propagation}
The propagation loss 
is modeled by a deterministic 
{\em path-loss function} $l(x)=(K\left\vert x\right\vert )^{\beta}$, where
$K>0$ and $\beta>2$\ are given constants, and some random propagation effects.
We split these effects into two categories conventionally called {\em
  (fast) fading} and {\em shadowing}. The former will be accounted in the
  model at the link-layer (in the peak bit-rate function
  cf.~(\ref{e.Shannon})). The latter impacts the choice of the serving
  BS and thus needs to be considered together with the network
  geometry. To this regard we assume that the shadowing between a given station $X_{n}\in\Phi$ and all
locations $y$ on the plane is modeled by some positive valued stochastic process
$\mathbf{S}_{n}\left(  y-X_{n}\right)  $. We assume that the processes $\mathbf{S}_{n}\left(
\cdot\right)  $ are i.i.d.  marks of
$\Phi$.~\footnote{The assumption that all types of base stations have
  the same distribution of the shadowing can be easily relaxed.}
Moreover we assume that $\mathbf{S}_{1}(y)$ are identically
distributed across $y$, but do not make  any assumption
regarding  the dependence of $\mathbf{S}_{n}(y)$ across $y$.

Thus the inverse of the power averaged over fast fading,  received at $y$ from BS $X_n$, denoted by
$L_{X_{n}}\left(  y\right)=L_{n}\left(  y\right)$ which we call (slightly abusing the
terminology) the propagation-loss from this station 
is given by
\begin{equation}
L_{X_n}\left(  y\right)  =\frac{l\left(  \left\vert y-X_{n}\right\vert
\right)  }{P_{Z_n}\mathbf{S}_n\left(  y-X_{n}%
\right)  }. \label{e.Propagation}%
\end{equation}
In what follows, we will often simplify the notation writing  $L_X(\cdot)$ for the propagation-loss of BS $X\in\Phi$.

\subsubsection{Service zones, SINR and peak bit-rates}
\label{sss.Cells}
We assume that each (potential) user located at $y$ on the plane is served by the BS offering the strongest received power among all
the BS in the network. Thus, the zone served by BS $X\in\Phi$,
denoted by $V(X)$,
which we keep calling  {\em cell} of $X$ (even if random shadowing
makes it need not to be a polygon or even a connected set) is given by
\begin{equation}
V\left(  X\right)  =\left\{  y\in\mathbb{R}^{2}:L_{X}\left(  y\right)  \leq
L_{Y}\left(  y\right)\;\text{for all\,} Y\in\Phi\right\}  \label{e.Cell}%
\end{equation}
We define the  (downlink) SINR at location $y\in V\left(  X\right)$
(with respect to the serving  BS $X\in\Phi$) as follows
\begin{equation}
\SINR\left(  y,\Phi\right) :=\frac{1/L_{X}\left(  y\right)}%
{N+\sum_{Y\in\Phi\backslash\left\{  X\right\}  } \varphi_Y/L_{Y}\left(
y\right)  },\label{e.SINR}%
\end{equation}
where $N$\ is the noise power and the {\em activity factors} $\varphi_Y\in[0,1]$ account (in a way that
will be made specific in Section~\ref{ss.fixedPoing}) for the activity of stations
$Y\in\Phi$. In general, we assume that 
$\varphi_Y$ are additional (not necessarily independent) marks of the
point process $\Phi$, possibly dependent on tiers and
shadowing of all BS.

We assume that the {\em (peak) bit-rate} at location $y$, defined as
the number of bits per second a user located at $y$ can download 
when served alone by its BS, is some function $R(\SINR)$ of the SINR. Our general analysis presented in Section~\ref{s.Analysis} does not
depend on any particular form of this function. A specific 
expression will be assumed for the numerical results in Section~\ref{s.NumericalResults}.

\subsection{Network users}
\subsubsection{User-arrival process}
\label{sss.arrival}
We consider variable bit-rate (VBR) traffic; i.e., users arrive to the network
and require to transmit some volumes of data at bit-rates induced  by the
network. We assume a {\em homogeneous time-space Poisson point process
  of user arrivals} of intensity $\gamma$ arrivals per second per
  $\mathrm{km}^2$. This means that the time  between two successive
arrivals in a given zone of surface $S$ is exponentially distributed
with  parameter $\gamma\times S$, and all users arriving
to this zone take their locations independently and uniformly.
The time-space process of user arrivals is independently marked 
by random, identically distributed  volumes of data the users want to
download from their respective serving BS. These volumes are arbitrarily
distributed and have mean $1/\mu$ bits. 

The above arrival process
induces the {\em traffic demand per surface unit}
$\rho=\gamma/\mu$
expressed in bits per second per $\mathrm{km}^{2}$.
The {\em traffic demand in the cell  of BS $X\in\Phi$} equals
\begin{equation}
\rho\left(  X\right)  =\rho\left\vert V\left(  X\right)  \right\vert, \label{e.TrafficDemand}%
\end{equation}
where $\left\vert A\right\vert $\ denotes the surface of the set $A$;
$\rho(X)$  is expressed in bits per second.

\subsubsection{Processor-sharing service policy}
We shall assume that the BS allocates an equal fraction of its
resources to all users it serves at a given time. Thus, when there are $k$ users in a cell,
each user obtains a bit-rate equal to its  peak bit-rate divided by $k$. 
More explicitly, if a base station
located at $X$ serves $k$ users located at $y_{1},y_{2},\ldots,y_{k}\in
V\left(  X\right)  $\ then the bit-rates of these users are equal to
$R\left(  \mathrm{SINR}\left(  y_{j},\Phi\right)  \right)/k  $,
$j= 1,2,\ldots,k  $, respectively.
Users having completed their service (download of the requested
volumes) leave  the system. 

\subsection{Time-averaged cell characteristics}
\label{s.CellCharacteristics}
Given the network realization (including the shadowing and the cell activity
factors), the performance of each cell $V(X)$ of $X\in\Phi$ corresponds to 
a (spatial version of the) processor sharing-queue. More specifically,
due to complete independence  property of the Poisson process of
arrivals, the temporal dynamics of these queses are independent. Thus,
we can use the classical queuing-theoretic results regarding
processor-sharing queues to describe the time-averaged (steady-state)
characteristic of all individual  cells. 
Besides the traffic demand $\rho(X)$ already specified  in Section~\ref{sss.arrival},
these characteristics are: the  critical traffic $\rho_c(X)$, cell load $\theta(X)$, mean number of
users $N(X)$, average user throughout $r(X)$, busy (non-idling)
probability $p(X)$. 
In what follows we present these characteristics in a form tailored to
our wireless context; cf~\cite{BonaldProutiere2003}.
All these characteristics can be seen as further, general
(non-independent) marks of the point process $\Phi$ and depend also on
BS types, their activity factors  and shadowing processes.

\subsubsection{Critical traffic}
The processor-sharing queue of the base station $X\in\Phi$ is stable
if and only if its traffic demand $\rho(X)$ is smaller than
the critical value which is the 
harmonic mean of the peak bit-rates over the cell; cf~\cite{KarrayJovanovic2013Load}
\begin{equation}
\rho_{\mathrm{c}}\left(  X\right)  :=\left\vert V\left(  X\right)  \right\vert
\left(  \int_{V\left(  X\right)  }R^{-1}\left(  \mathrm{SINR}\left(
y,\Phi\right)  \right)  dy\right)^{-1}\,. \label{e.CriticalTraffic}%
\end{equation}

\subsubsection{Cell load} We define it as the ratio between
the (actual) cell traffic demand and its critical value
\begin{equation}
\theta\left(  X\right): =\frac{\rho\left(  X\right)  }{\rho_{\mathrm{c}%
}\left(  X\right)  }=\int_{V\left(  X\right)  }\rho R^{-1}\left(
\mathrm{SINR}\left(  y,\Phi\right)  \right)  dy\,. \label{e.Load1}%
\end{equation}

\subsubsection{Mean number of users}
The mean number of users in the steady state of the processor sharing
queue at BS $X\in\Phi$ can be expressed as 
\begin{equation}\label{e.UsersNumber1}%
N\left(  X\right):=
\begin{cases}\displaystyle
\frac{\theta\left(  X\right)  }{1-\theta\left(  X\right)  }& 
\text{if $\theta(X)<1$}\\
\infty&\text{otherwise}\,.
\end{cases}
\end{equation}

\subsubsection{User throughput}
\label{sss.throughput-cell} is defined as the ratio between the
mean  volume request $1/\mu$ and the mean typical-user service time in
the cell $X$. By the Little's law it can be expressed as
\begin{equation}\label{e.UserThroughput1}%
r\left(  X\right):=\frac{\rho(X)}{N(X)}\,.
\end{equation}

\subsubsection{Busy probability} The probability that the BS
$X\in\Phi$ is not idling (serves at least one user) in the steady
state is equal to
\begin{equation}
p\left(  X\right)  =\min\left(  \theta\left(  X\right)  ,1\right)\,.
\label{e.Proba}%
\end{equation}
It is easy to see that all the above characteristics (marks) of the BS
$X\in\Phi$ can be expressed using the traffic demand $\rho(X)$ and
the cell load $\theta(X)$ in the following order
\begin{align}\label{e.CriticalTraffic2}
\rho_c(X)&=\frac{\rho(X)}{\theta(X)}\,,\\
r(X)&=\max(\rho_{\mathrm{c}}\left(  X\right)  -\rho\left(
X\right)  ,0) \label{e.UserThroughput}\,,\\%
N\left(  X\right)  &=\frac{\rho\left(  X\right)  }{r\left(  X\right)  }\,.
\label{e.UsersNumber}%
\end{align}


\subsection{Spatial inter-dependence of cells --- cell load equations}
\label{ss.fixedPoing}
The individual cell characteristics described in the previous section depend on
the location of all base stations, shadowing realizations but also on
the cell activity factors $\varphi_X$, $X\in\Phi$, introduced in
Section~\ref{sss.Cells} to weight the extra cell interference in the
SINR expression, and which have been arbitrary numbers between
0 and 1 up to now. These factors suppose to account for the fact that 
BS might not transmit with their respective maximal powers $P_j$ depending
on the BS types $j=1,\ldots,J$ all the time. 

It is quite natural to think that BS transmit only when they serve at least one
user.~\footnote{Analysis of more sophisticated power control schemes is beyond
the scope of this paper.} Taking this fact into account in an exact way
requires introducing in the denominator of~(\ref{e.SINR}) the indicators that
a given station $Y\in\Phi$ at a given time is not idling. This, in
consequence, would lead to the probabilistic dependence of the service process
at different cell, thus revoking  the explicit expressions for their
characteristics presented in Section~\ref{s.CellCharacteristics} 
and the model becomes non-tractable.~\footnote{We are even not aware
  of any result regarding the 
stability of such a family of dependent queues.}
For this reason, we take into account whether $Y$ is idling or not in a
simpler way, multiplying its maximal transmitted power  by the  \emph{probability} $p(Y)$ that
it is busy in the steady state. 
In other words, in the SINR expression~(\ref{e.SINR}) we take 
$\varphi_Y=p(Y)$
where $p(Y)$ is  given by~(\ref{e.Proba}); i.e., 
\begin{equation}
\mathrm{SINR}\left(  y,\Phi\right)  =\frac{\frac{1}{L_{X}\left(  y\right)  }%
}{N+\sum_{Y\in\Phi\backslash\left\{  X\right\}  }\frac{\min\left(
\theta\left(  Y\right)  ,1\right)  }{L_{Y}\left(  y\right)  }}\,.
\label{e.LoadInterference}%
\end{equation} 
We call this
model {\em (load)-weighted interference model}. Clearly this assumption means
that $\theta(X)$ cannot be calculated independently for all cells but
rather are solutions of the following fixed point problem, which we
call {\em cell load equations} 

\begin{equation}
\theta\left(  X\right)  =\rho\int_{V\left(  X\right)  }R^{-1}\left(
\frac{\frac{1}{L_{X}\left(  y\right)  }}{N+\sum_{Y\in\Phi\backslash\left\{
X\right\}  }\frac{\min\left(  \theta\left(  Y\right)  ,1\right)  }%
{L_{Y}\left(  y\right)  }}\right)  dy\,. \label{e.FixedPoint}%
\end{equation}
This is a system of equations which needs to be solved for $\left\{  \theta\left(
X\right)  \right\}  _{X\in\Phi}$ given network and shadowing
realization. In the remaining part of this paper we  assume
that such a solution exists and is unique.~\footnote{Note that the mapping in the right-hand-side of~(\ref{e.FixedPoint})
is increasing  in all $\theta(Y)$, $Y\in\Phi$ provided function $R$ is increasing.
Using this property it is easy to see that successive iterations of this mapping
started off $\theta(Y)\equiv 0$ on one hand side and off $\theta(Y)=1$
(full interference model) on the other side,
converge to a minimal and maximal solution of~(\ref{e.FixedPoint}),
respectively. The uniqueness of the solution (in the Poisson or more
general) network is an interesting theoretical question, which is however beyond the scope of this paper. 
A very similar problem (with finite number of stations and a discrete traffic demand)
is considered in~\cite{siomina2012analysis}, where the uniqueness of
the solution is proved.}
The other characteristics of each cell are then deduced from the cell
load and traffic demands using the relations described in
Section~\ref{s.CellCharacteristics}.


\section{Model analysis}
\label{s.Analysis}
We begin our analysis by recalling some useful results regarding the Poisson
network model. Next, in Section~\ref{ss.TypicalCell} we present our
main results and in Section~\ref{ss.MeanCell} postulate some
simplified approach inspired by these results.
 
\subsection{Preliminaries: typical and zero-cell of the multi-tier network}
We briefly recall here the notions of the typical and zero-cell,
usually considered for the Voronoi tessellation and here regarding
our network cells.
Both objects will play their respective roles in the remaining part of
the paper.  

We denote by $\Pro$ the
probability corresponding to the stationary distribution of our model 
as described in Section~\ref{s.ModelDescription}.

\subsubsection{The typical cell}
\label{e.TypCell}
This is a mathematical formalization of a cell whose BS is ``arbitrarily chosen''
from the set of all stations, without any bias towards its
characteristics,  in particular its type and the cell size.  The formalization is made
on  the ground of Palm theory, where the typical cell $V(0)$ is
this of the  BS $X_0=0$ located at the origin under the {\em Palm
probability} $\Pro^0$. By the Slivnyak's theorem the Palm distribution of 
the Poisson process corresponds to the homogeneous (stationary) one,
with the  ``extra'' point $X_0=0$ added at the origin. In the case of
i.i.d. marked Poisson process, as in our case,  
this extra point gets and independent copy of the   mark, with
the original mark distribution.

Note that in our network the probability that an ``arbitrarily
chosen'' BS is of type $j$, $j=1,\ldots,J$, is equal to
$\lambda_j/\lambda$. More formally, 
\begin{equation}\label{e.PrZ0}
\Pro^0(Z_0=j)=\lambda_j/\lambda\,.
\end{equation}

We remark, that   the typical cell does not
have any physical existence in a given network.  It is rather a useful
mathematical tool, in the sense that 
the mathematical expectations  under
$\Ex^0$ of the typical cell $V(0)$ characteristics (as the cell
traffic demand $\rho(0)$, cell load $\theta(0)$, etc) can be interpreted
as network-averages of the (already time-averaged) cell performance metrics. 
For example the network-averaged traffic demand per cell, considering
all cells or only cells  of type $j=1,\ldots,J$, equal,
 respectively
\begin{align}\label{e.TypicalTraffic-general}%
\bar{\rho}&:=\Ex^0[\rho(0)]=\lim_{\left\vert A\right\vert \rightarrow\infty}\frac{1}%
{\Phi\left(  A\right)  }\sum_{X\in\Phi\cap A}\rho\left(  X\right)\,,\\
\bar{\rho}_{j}&:=\Ex^0[\rho(0)\,|\,Z_0=j]=\lim_{\left\vert A\right\vert \rightarrow\infty}\frac{1}%
{\Phi_{j}\left(  A\right)  }\sum_{X\in\Phi_{j}\cap A}\rho\left(  X\right)\,.
\label{e.TypicalTraffic}%
\end{align}
where $A$ denotes a disc centered at the origin, of radius
increasing  to infinity. The convergence is $\Pro$-almost sure and follows
from the ergodic theorem for point processes (see~\cite[Theorem~13.4.III]{DaleyVereJones2003}).
We define similarly the network-average load (overall and per cell type)
\begin{align}
\bar{\theta}&:=\Ex^0[\theta(0)]\,,\\
\bar{\theta}_{j}&:=\Ex^0[\theta(0)\,|\,Z_0=j]\quad j=1,\ldots,J.
\end{align}
The convergence analogue to~(\ref{e.TypicalTraffic-general}), (\ref{e.TypicalTraffic})
holds for each of the previously considered local characteristics.
However (at least for Poisson network) it is not customary to consider directly   
$\Ex^0[N(0)]$  since  the (almost sure) existence of some (even arbitrarily small) fraction of BS
$X$ which are not stable (with $\rho(X)\ge\rho_{c}(X)$, hence $N(X)=\infty$)
makes $\Ex^{0}[N(0)]=\infty$.~\footnote{%
For a well dimensioned network one does not expect
unstable cells. For a perfect hexagonal network model $\Phi$ without
shadowing \emph{all} cells
are stable or unstable depending on the value of the per-surface traffic
demand $\rho$. For an (infinite) homogeneous Poisson model $\Phi$,
for arbitrarily small $\rho$ there exists a non-zero fraction of BS
$X\in\Phi$, which are non-stable. This fraction is very small for reasonable
$\rho$, allowing to use Poisson to study QoS metrics which, unlike
$\mathbf{E}^{0}[N(0)]$, are not ``sensitive'' to this artifact.
}
 
Also, as we will explain in what follows,  $\Ex^0[r(0)]$ does not
have a natural interpretation. In particular it {\em cannot} be interpreted
as the mean user throughput.

\subsubsection{Zero cell} This is the  cell (of the stationary
distributed network) that covers the origin $0$
of the plane, which plays the role of an arbitrarily fixed location.
The characteristics of the zero-cell correspond to the characteristics
of the cell which serves the typical user. Clearly this is a size-biased choice and indeed the zero cell has
different distributional characteristics from the typical cell.
Let us denote by $X^*$ the location of the BS serving the zero-cell
and its type by $Z^*$.

We will recall now a useful result regarding
multi-tier networks, from which we will derive the distribution of
$Z^*$; cf~\cite[Lemma~1]{equivalence2013}.
\begin{lemma}\label{l.Lambda}
Assume that $\mathbb{E}\left[  S^{2/\beta}\right]  <\infty$. Then $\hat{\Phi
}=\left\{  (L_n=L_{n}(0),Z_n)\right\}  _{n}$ is a Poisson point process on
$[0,\infty)\times\{1,\ldots,J\}$ with intensity measure
\begin{equation}\label{e.Lambda}
\Lambda\left((  0,t\right]\times\{j\}):=\Ex[\#\{n:L_n\le t, Z_n=j\}]  =a_jt^{2/\beta}\,,%
\end{equation}
$t\ge 0$, $j=1,\ldots,J$, where
\begin{equation}\label{e.aj}
a_{j}:=\frac{\pi\mathbb{E}\left[  S^{2/\beta}\right]  }{K^{2}}\lambda_{j}%
P_{j}^{2/\beta}\,.
\end{equation}
\end{lemma}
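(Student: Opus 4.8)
The plan is to exhibit $\hat\Phi$ as the image of an independently marked Poisson point process under a measurable map, and then to apply the \emph{mapping (displacement) theorem} for Poisson processes together with Campbell's formula to read off the intensity measure.

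First I would rewrite the propagation-loss at the origin. Evaluating~(\ref{e.Propagation}) at $y=0$ gives, for each point $X_n$ of tier $Z_n$,
\begin{equation*}
L_n(0)=\frac{(K|X_n|)^{\beta}}{P_{Z_n}\,S_n},\qquad S_n:=\mathbf{S}_n(-X_n).
\end{equation*}
The shadowing processes $\mathbf{S}_n(\cdot)$ are i.i.d.\ marks of $\Phi$, independent of the locations $X_n$ and of the types $Z_n$, and $\mathbf{S}_1(y)$ is identically distributed across $y$. Hence, conditioning on $X_n=x$, the value $S_n=\mathbf{S}_n(-x)$ has the common marginal law of $S$ for every $x$, so the $S_n$ may be treated as i.i.d.\ marks distributed as $S$ and independent of $(X_n,Z_n)$. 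The point worth stressing is that, because the field is sampled at the \emph{single} location $y=0$, the unspecified spatial dependence of $\mathbf{S}_n(\cdot)$ is irrelevant: only the one-dimensional marginal $S$ enters.

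By the marking theorem and the independence of the tiers, the collection $\{(X_n,Z_n,S_n)\}_n$ is then a Poisson point process on $\mathbb{R}^2\times\{1,\dots,J\}\times(0,\infty)$ whose tier-$j$ component has intensity $\lambda_j\,dx$ marked independently by the law of $S$. I would map it by $\phi(x,j,s)=\big((K|x|)^{\beta}/(P_j s),\,j\big)$, so that $\hat\Phi=\phi(\{(X_n,Z_n,S_n)\}_n)$. To identify its intensity I apply Campbell's formula tier by tier:
\begin{align*}
\Lambda\big((0,t]\times\{j\}\big)
&=\lambda_j\int_{\mathbb{R}^2}\Pro\!\Big(\tfrac{(K|x|)^{\beta}}{P_j S}\le t\Big)\,dx
=\lambda_j\int_{\mathbb{R}^2}\Pro\!\Big(|x|\le\tfrac{(tP_jS)^{1/\beta}}{K}\Big)\,dx\\
&=\lambda_j\,\Ex\!\Big[\pi\,\tfrac{(tP_jS)^{2/\beta}}{K^{2}}\Big]
=\frac{\pi\,\Ex[S^{2/\beta}]}{K^{2}}\,\lambda_j P_j^{2/\beta}\,t^{2/\beta},
\end{align*}
where the second line passes to polar coordinates via $\int_{\mathbb{R}^2}\ind\{|x|\le r\}\,dx=\pi r^2$ and then integrates over $S$. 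This is precisely $a_j t^{2/\beta}$.

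Finally, the hypothesis $\Ex[S^{2/\beta}]<\infty$ guarantees $a_j<\infty$, so $\Lambda$ is finite on each bounded set and hence locally finite, and $t\mapsto a_j t^{2/\beta}$ is continuous, so $\Lambda$ is non-atomic; the mapping theorem then yields that $\hat\Phi$ is a \emph{simple} Poisson point process with the stated intensity. I expect the only genuine obstacle to be the shadowing reduction in the first step --- justifying that $\mathbf{S}_n(-X_n)$ acts as an i.i.d.\ mark with the law of $S$, independent of location; once that is secured, the remainder is a direct application of the mapping theorem and an elementary polar integral.
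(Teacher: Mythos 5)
Your proof is correct and coincides with the standard argument behind this lemma: the paper itself does not prove it but cites it from~\cite[Lemma~1]{equivalence2013}, where the result is obtained exactly as you do, by treating $\mathbf{S}_n(-X_n)$ as an i.i.d.\ mark with the marginal law of $S$ (valid since the field is sampled only at $y=0$) and applying the displacement/mapping theorem with the Campbell polar-coordinate computation yielding $a_j t^{2/\beta}$. Your added care about local finiteness of $\Lambda$ under $\Ex[S^{2/\beta}]<\infty$ and the identification of the independent type-marking are exactly the points the cited proof relies on, so nothing is missing.
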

\begin{remark}\label{r.Z*}
The form~(\ref{e.Lambda}) of the intensity measure $\Lambda$ of
$\hat{\Phi}$ allows us
to conclude that the point process $\{L_n(0)\}_n$ of propagation-loss values (between all base stations and the
origin) is a Poisson point process of intensity
$\Lambda((0,t]\times\{1,\ldots,J\})=at^{2/\beta}$, where
\begin{equation}\label{e.a}
a:=\sum_{j=1}^Ja_j\,. 
\end{equation}
Moreover, the types $Z_n$ of the BS corresponding to the respective
propagation-loss values $L_n$ constitute i.i.d. marking of this latter
process of propagation-loss values, 
with the probability that an arbitrarily chosen propagation-loss
comes from a BS of type $j$ having  probability $a_j/a$.
In particular, for the serving station (offering the smallest propagation-loss)
we have
\begin{equation}\label{e.PrZ*}
\Pro\{\,Z^*=j\,\}=a_j/a\,.\footnote{Interpreting~(\ref{e.PrZ0}) and  (\ref{e.PrZ*}) we can say
   that an arbitrarily chosen BS is of type $j$ with probability
   $\lambda_j/\lambda$, while an arbitrarily chosen propagation-loss
   (measured at the origin) comes from a BS of type $j$ with
   probability $a_j/a$.}
\end{equation}
\end{remark}
Our second  remark on the result of Lemma~\ref{l.Lambda} regards an
equivalent way of generating the Poisson point process of
intensity~(\ref{e.Lambda}).

\begin{remark}\label{r.equivalence}
Consider a homogeneous Poisson network of intensity $\lambda$, in
which all stations emit with the same power
\begin{equation}
P=\left(  \sum_{j=1}^ J\frac{\lambda_{j}}{\lambda}P_{j}^{2/\beta}\right)
^{\beta/2}\, \label{e.Power}%
\end{equation}
and assume the same model of the propagation-loss with shadowing as described
in Section~\ref{sss.Propagation}. Let us ``artificially'' (without
altering the power $P$) mark these
BS by randomly, independently selecting a mark  $j=1\ldots,J$ for each
station with probability $a_j/a$. A direct
calculation shows  that the marked propagation-loss process observed
in this homogeneous network by a user located at the origin, analogue 
to $\hat\Phi$, has the same intensity measure $\Lambda$ given
by~(\ref{e.Lambda}). Consequently, the distribution of all user/network
characteristics, which are functionals of the marked propagation-loss
process $\hat\Phi$ can be equivalently calculated using this {\em equivalent homogeneous}
model. 
\end{remark}

\subsection{Global network performance metrics}
\label{ss.TypicalCell}
The objective of this section is to express pertinent,
global network characteristics
and relate them 
to  mean throughput of the typical user of the network.

\subsubsection{Traffic and load per cell}
The mean traffic demand and load of the
typical cell, globally and per cell type, can be expressed as follows.
\begin{proposition}
\label{p.TypicalDemand}
 We have for the traffic demand
\begin{align}
\bar{\rho}& =\frac{\rho}{\lambda}\nonumber\\
\bar{\rho}_{j}  &
=\bar{\rho}\frac{P_{j}^{2/\beta}}{P^{2/\beta}},\quad j=1,\ldots,J,
\label{e.TrafficPerType}%
\end{align}
where $P$\ is the ``equivalent network'' power given  by~(\ref{e.Power}).
\end{proposition}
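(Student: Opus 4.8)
The plan is to compute $\bar\rho$ and $\bar\rho_j$ directly from their definitions as Palm expectations of the cell traffic demand $\rho(0)=\rho\,|V(0)|$, and to reduce everything to the expected area of the typical cell. The key identity is that, by stationarity, the cells $\{V(X)\}_{X\in\Phi}$ tessellate the plane, so the expected area of the typical cell is the reciprocal of the intensity of the points that own the cells. First I would handle the global quantity $\bar\rho$. Since $\rho(0)=\rho\,|V(0)|$ and $\rho$ is a deterministic constant, we have $\bar\rho=\Ex^0[\rho(0)]=\rho\,\Ex^0[|V(0)|]$. The standard tessellation identity (a consequence of the mass-transport / Neveu exchange principle, or simply the fact that the cells partition $\mathbb{R}^2$ almost surely) gives $\Ex^0[|V(0)|]=1/\lambda$, because $\Phi$ has intensity $\lambda$ and each point of the plane belongs to exactly one cell. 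This immediately yields $\bar\rho=\rho/\lambda$.

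For the per-type traffic $\bar\rho_j$, the conditioning $Z_0=j$ in~(\ref{e.TypicalTraffic}) restricts attention to cells whose owner is a tier-$j$ station. The analogous tessellation identity, now applied to the subprocess $\Phi_j$ of intensity $\lambda_j$, gives $\Ex^0[|V(0)|\,|\,Z_0=j]=1/\lambda_j$ \emph{only if} tier-$j$ cells partitioned the plane among themselves — which they do not, since the cell boundaries in~(\ref{e.Cell}) depend on \emph{all} tiers through the propagation-loss. So the naive per-tier tessellation argument fails, and this is the step I expect to be the main obstacle. The correct route is to express the conditional expected area through the intensity of served area per tier: the fraction of the plane covered by tier-$j$ cells equals $\lambda_j\,\Ex^0[|V(0)|\,|\,Z_0=j]$, and this fraction is exactly the probability $\Pro\{Z^*=j\}$ that the typical location is served by a tier-$j$ station.

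Here I would invoke Remark~\ref{r.Z*}, which gives $\Pro\{Z^*=j\}=a_j/a$ with $a_j$ defined in~(\ref{e.aj}). Combining the covered-fraction identity with $\Pro^0(Z_0=j)=\lambda_j/\lambda$ from~(\ref{e.PrZ0}), I get
\begin{equation}
\bar\rho_j=\rho\,\Ex^0[|V(0)|\,|\,Z_0=j]=\rho\,\frac{\Pro\{Z^*=j\}}{\lambda_j}=\frac{\rho}{\lambda_j}\cdot\frac{a_j}{a}\,.
\end{equation}
Substituting $a_j=\frac{\pi\Ex[S^{2/\beta}]}{K^2}\lambda_j P_j^{2/\beta}$ from~(\ref{e.aj}), the factor $\lambda_j$ cancels and the prefactor $\pi\Ex[S^{2/\beta}]/K^2$ cancels between $a_j$ and $a$, leaving $\bar\rho_j=\frac{\rho}{\lambda}\cdot\frac{\lambda\,P_j^{2/\beta}}{\sum_k\lambda_k P_k^{2/\beta}}$. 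Recognizing $\sum_k\lambda_k P_k^{2/\beta}=\lambda P^{2/\beta}$ from the definition~(\ref{e.Power}) of the equivalent power $P$, and using $\bar\rho=\rho/\lambda$, this collapses to $\bar\rho_j=\bar\rho\,P_j^{2/\beta}/P^{2/\beta}$, as claimed. The only genuine subtlety is the covered-fraction argument; the rest is bookkeeping that cancels cleanly once the equivalent-power normalization~(\ref{e.Power}) is recognized.
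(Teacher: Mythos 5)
Your proof is correct and takes essentially the same route as the paper: your tessellation/covered-fraction identity is precisely the inverse formula of Palm calculus that the paper invokes (giving $\Ex^0\left[\left\vert V(0)\right\vert \ind\left\{Z_0=j\right\}\right]=\lambda^{-1}\Pro\left(Z^*=j\right)$), after which both arguments use $\Pro\{Z^*=j\}=a_j/a$ from Remark~\ref{r.Z*} together with~(\ref{e.PrZ0}) and the same algebraic cancellation via the equivalent-power definition~(\ref{e.Power}). You also correctly flagged the one genuine subtlety --- tier-$j$ cells alone do not tessellate the plane --- which the paper handles implicitly by applying the inverse formula to $\left\vert V(0)\right\vert\ind\{Z_0=j\}$ under the Palm measure of the full process $\Phi$ rather than of $\Phi_j$.
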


\begin{proof}
We have 
\[
\bar{\rho}=\Ex^{0}\left[  \rho\left(  0\right)  \right]  =\rho
\Ex^{0}\left[  \left\vert V\left(  0\right)  \right\vert \right]
=\frac{\rho}{\lambda}\,,%
\]
where the second equality is due to~(\ref{e.TrafficDemand}) and the last
one follows from the inverse formula of Palm calculus~\cite[Theorem
4.2.1]{BaccelliBlaszczyszyn2009T1} (which may be extended to the case where
the cell associated to each BS is not necessarily the Voronoi cell; the only
requirement is that the user located at $0$ belongs to a unique cell almost
surely). Similarly,
\begin{align*}
\bar{\rho}_{j}  
&=\rho\Ex^{0}\left[  \left\vert V\left(  0\right)\,  \right\vert
  \,Z_0=j\right]  
 =\rho\frac{\Ex^{0}\left[  \left\vert V\left(  0\right)  \right\vert
\times\ind\left\{  Z_0=j\right\}  \right]  }{\mathbb{P}^{0}\left(
Z_0=j\right)  }\nonumber\\
&  =\frac{\rho}{\lambda}\frac{\mathbb{P}\left(  Z^{\ast}=j\right)
}{\mathbb{P}^{0}\left( Z_0=j\right)  }\nonumber
=\bar\rho\frac{a_{j}/a}{\lambda_{j}/\lambda}
=\bar\rho \frac{P_{j}^{2/\beta}}{P^{2/\beta}}\,,
\end{align*}
where  the third equality follows again from the inverse formula of
Palm calculus and the two remaining ones from~(\ref{e.PrZ0}), (\ref{e.PrZ*}) and
(\ref{e.aj}), (\ref{e.a}), respectively.
\end{proof}

\begin{proposition}
\label{p.TypicalLoad} We have for the cell load
\begin{align}
\bar{\theta}  &  =\frac{\rho}{\lambda}\Ex\left[  R^{-1}\left(  \mathrm{SINR}%
\left(  0,\Phi\right)  \right)  \right]\,, \label{e.Load}\\
\bar{\theta}_{j}  &  
=\bar{\theta}\frac{P_{j}^{2/\beta}}{P^{2/\beta}},\quad j=1,\ldots, J\,,
\label{e.LoadPerType}%
\end{align}
where $P$\ is given by~(\ref{e.Power}).
\end{proposition}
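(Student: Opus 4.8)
The plan is to follow the template of the proof of Proposition~\ref{p.TypicalDemand}, upgrading the constant integrand $\rho$ there to the location-dependent integrand $\rho R^{-1}(\SINR(y,\Phi))$ here. First I would write, using the definition~(\ref{e.Load1}) of the cell load,
\[
\bar\theta = \Ex^0[\theta(0)] = \rho\,\Ex^0\Bigl[\int_{V(0)} R^{-1}(\SINR(y,\Phi))\,dy\Bigr],
\]
and then apply the inverse formula of Palm calculus~\cite[Theorem~4.2.1]{BaccelliBlaszczyszyn2009T1}, valid (as noted in Proposition~\ref{p.TypicalDemand}) because every location belongs almost surely to a unique cell. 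The point to verify is that $y\mapsto \SINR(y,\Phi)$ is a translation-covariant functional of the configuration: for $y\in V(0)$ the SINR is taken with respect to the station at the origin, so after the shift sending $y$ to $0$ it becomes $\SINR(0,\Phi)$ under the stationary law. Consequently the inverse formula turns the Palm integral over the typical cell into a stationary expectation at the origin, giving $\bar\theta = \frac{\rho}{\lambda}\Ex[R^{-1}(\SINR(0,\Phi))]$, which is~(\ref{e.Load}).

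For the per-type formula~(\ref{e.LoadPerType}) I would condition on $Z_0=j$ exactly as in Proposition~\ref{p.TypicalDemand}, writing
\[
\bar\theta_j = \rho\,\frac{\Ex^0\bigl[\int_{V(0)}R^{-1}(\SINR(y,\Phi))\,dy\,\ind\{Z_0=j\}\bigr]}{\Pro^0(Z_0=j)}.
\]
The same inverse-formula step sends the numerator to $\frac{\rho}{\lambda}\Ex[R^{-1}(\SINR(0,\Phi))\,\ind\{Z^*=j\}]$, since the origin's cell having type $j$ corresponds, after the shift, to the serving station $X^*$ being of type $j$ (this is the identical mechanism, $\ind\{Z_0=j\}\to\ind\{Z^*=j\}$, already used for the traffic demand). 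Combined with $\Pro^0(Z_0=j)=\lambda_j/\lambda$ from~(\ref{e.PrZ0}), this yields $\bar\theta_j = \frac{\rho}{\lambda}\,\Ex[R^{-1}(\SINR(0,\Phi))\,\ind\{Z^*=j\}]/(\lambda_j/\lambda)$.

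The crux is then to factor this numerator. Here I would invoke the equivalence of Remark~\ref{r.equivalence}: the law of $\hat\Phi$, and hence of any functional of it such as $\SINR(0,\Phi)$, is reproduced by the homogeneous network in which all stations emit the common power $P$ of~(\ref{e.Power}) and the tier labels are assigned as independent marks with $\Pro(Z^*=j)=a_j/a$. Because these labels do not alter the emitted power, they leave the SINR unchanged, so in the equivalent model $Z^*$ is independent of $\SINR(0,\Phi)$. This independence gives $\Ex[R^{-1}(\SINR(0,\Phi))\,\ind\{Z^*=j\}] = \Ex[R^{-1}(\SINR(0,\Phi))]\,a_j/a$, and substituting together with $\bar\theta=\frac{\rho}{\lambda}\Ex[R^{-1}(\SINR(0,\Phi))]$ reduces $\bar\theta_j$ to $\bar\theta\,(a_j/a)/(\lambda_j/\lambda)$. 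The final simplification $(a_j/a)/(\lambda_j/\lambda)=P_j^{2/\beta}/P^{2/\beta}$ is the identical computation already carried out in Proposition~\ref{p.TypicalDemand}, using~(\ref{e.aj}),~(\ref{e.a}) and~(\ref{e.Power}).

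I expect the main obstacle to be the justification of the independence of $Z^*$ and $\SINR(0,\Phi)$ — namely, making precise that the SINR is (in distribution) a functional of the marked propagation-loss process to which Remark~\ref{r.equivalence} applies, and that the serving station's tier label enters neither the loss values nor the activity weights once the common-power equivalent model is used. The inverse-formula step is routine given Proposition~\ref{p.TypicalDemand}, which already treats the constant-integrand case; it is this factorization that carries the real content of the per-type law.
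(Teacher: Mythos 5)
Your proof follows the paper's almost step for step: the inverse formula of Palm calculus yields both $\bar\theta=\frac{\rho}{\lambda}\Ex\left[R^{-1}\left(\SINR(0,\Phi)\right)\right]$ and the conversion $\ind\{Z_0=j\}\to\ind\{Z^*=j\}$ in the conditioned expectation, and your factorization $\Ex\left[R^{-1}\left(\SINR(0,\Phi)\right)\ind\{Z^*=j\}\right]=\Ex\left[R^{-1}\left(\SINR(0,\Phi)\right)\right]a_j/a$ is exactly the paper's fourth equality, which it justifies by the i.i.d.\ type-marking of the propagation-loss process (Remark~\ref{r.Z*}) --- the same fact you invoke in its equivalent-homogeneous-network packaging (Remark~\ref{r.equivalence}). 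The proposal is correct and takes essentially the same approach as the paper.
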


\begin{proof}
Denote $g\left(  y,\Phi\right)  =R^{-1}\left(
\mathrm{SINR}\left(  y,\Phi \right)  \right)$. 
In the same lines as the proof of Proposition~\ref{p.TypicalDemand},
by the inverse formula of Palm calculus 
$\bar{\theta}   =\Ex^{0}\left[  \theta\left(  0\right)  \right] 
 =\frac{\rho}{\lambda
}\Ex\left[  g\left(  0,\Phi\right)  \right]$.
Similarly
\begin{align*}
\bar{\theta}_{j}&=\Ex^{0}\left[  \theta\left(  0\right)  \,|\,Z_0=j\right]\\
&  =\rho\Ex^{0}\left[  \int_{V\left(  0\right)  }g\left(  y,\Phi\right)
\ind\left\{ Z_0=j\right\}  dy\right]  /\mathbb{P}^{0}\left(
Z_0=j\right) \\
&  =\frac{\rho}{\lambda}\Ex\left[  g\left(  0,\Phi\right)  \ind\left\{
Z^{\ast}=j\right\}  \right]  /\mathbb{P}^{0}\left(  Z_0=j\right) \\
&=\frac{\rho}{\lambda}\Ex\left[  g\left(
0,\Phi\right)  \right]  \frac{a_{j}/a}{\lambda_{j}/\lambda}\,,%
\end{align*}
where the third  equality follows from the inverse formula of Palm
calculus, and the fourth equality from the independent marking of the
propagation-loss process by the BS types; cf Remark~\ref{r.Z*}.
\end{proof}

\subsubsection{Number of users per cell and mean user throughput}
For the reasons already explained at the end of
Section~\ref{e.TypCell} it is more convenient to average the number of
users per cell in the stable part of the network. To this regard we define
the  network-averaged number of users per {\em stable} cell as 
\begin{align*}
\bar{N} 
 :=\Ex^{0}\left[  N\left(  0\right)  \ind\left\{  \theta\left(  0\right)
<1\right\}  \right]
\end{align*}
and similarly for each cell tier  $j=1,\ldots,J$\
\begin{align}
\bar{N}_{j} 
: =\Ex^{0}\left[  N\left(  0\right) \ind\left\{  \theta\left(  0\right)
<1\right\}  |Z_0=j\right]\,. \nonumber
\end{align}

Note that the mean traffic demand $\bar\rho$, load $\bar\theta$ and number
of users $\barN$ per (stable) cell characterize network performance from the
point of view of its typical (or averaged) cell. We move now to a
typical user performance metric that is its mean throughput. This
latter QoS metric is traditionally (in queueing theory) defined  as 
the mean data volume requested by the typical user to the mean
service duration of the typical user.  In what follows we apply this
definition (already retained at the local, cell level in
Section~\ref{sss.throughput-cell}) globally to the whole network, 
filtering out the impact of unstable cells.

Denote by $\mathcal{S}_{j}$\ the union of stable cells of type $j=1,\ldots, J$; that
is
$\mathcal{S}_{j}=\bigcup_{X\in\Phi_{j}:\theta\left(  X\right)  <1}V\left(
X\right)$ 
and $\mathcal{S}=\bigcup_{j=1}^J\mathcal{S}_{j}$. 
Let  $\pi^\calS$ ($\pi^\calS_j$) be the probability the typical
user is served in a stable cell (of type $j=1,\ldots,J$) 
\begin{align*}
\pi^\calS &  =\mathbb{P}\left(  \theta\left(  X^{\ast}\right)  <1\right)\\
\pi^\calS_{j}  &  =\mathbb{P}\left(  \theta\left(  X^{\ast}\right)  <1\,|\,Z^{\ast}=j\right)  ,\quad j=1,\ldots, J\,,
\end{align*}
where (recall) $X^{\ast}$ is the BS whose cell covers the origin and
$Z^*$ is its type.
Note that
$\pi^{\mathcal{S}}_j=a/a_j\mathbf{E}[\mathbf{1}\{0\in\mathcal{S}_j\}]$ and
thus it can be related to the {\em volume fraction} of the stable
  part of the network served by tier  $j$ and similarly for
  $\pi^\calS=\mathbf{E}[\ind\{0\in\mathcal{S}\}]$.

We define the  {\em (global) mean user throughput} as 
\[
\bar{r}:=\lim_{\left\vert A\right\vert \rightarrow\infty}\frac{1/\mu
}{\text{mean call duration in }A\cap\mathcal{S}}%
\]
and for each cell type  $j=1,\ldots,J$,
\[
\bar{r}_{j}:=\lim_{\left\vert A\right\vert \rightarrow\infty}\frac{1/\mu
}{\text{mean call duration in }A\cap\mathcal{S}_{j}}\,,%
\]
where $A$ denotes a disc centered at the origin of radius
increasing  to infinity. These limits exist almost surely by the
ergodic theorem; cf~\cite[Theorem~13.4.III]{DaleyVereJones2003}.
Here is our main result regarding this mean user QoS. It can be seen
as a consequence of a spatial version of the Little's law.
\begin{proposition}
\label{p.TypicalThroughput}We have for the mean user throughput 
\begin{align}
\bar{r}  &  =\frac{\bar{\rho}}{\bar{N}}\pi^\calS\nonumber\\
\bar{r}_{j}  &  =\frac{\bar{\rho}_{j}}{\bar{N}_{j}}\pi^\calS_{j},\quad j=1,\ldots, J
\label{e.TypicalThroughput}%
\end{align}
\end{proposition}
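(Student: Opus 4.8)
The plan is to derive both identities from a spatial version of Little's law, combining the cell-level relation $N(X)=\rho(X)/r(X)$ with ergodic averaging over the disc $A$, and then to simplify using the demand identities of Proposition~\ref{p.TypicalDemand}. First I would fix a network realization and, over a long horizon $T$, count the total user-time and the total number of completed calls in the stable region $A\cap\calS$. In steady state the arrival (= departure) rate of calls into a stable cell $V(X)$ is $\gamma|V(X)|$, and its calls accumulate total user-time $N(X)\,T$ over $[0,T]$; hence the mean call duration in $A\cap\calS$ equals
\[
\frac{\sum_{X\in\Phi\cap A:\,\theta(X)<1}N(X)}{\gamma\sum_{X\in\Phi\cap A:\,\theta(X)<1}|V(X)|}\,,
\]
the factor $T$ cancelling. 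Here I use that the cells tile the plane, so the stable cells with BS in $A$ cover, up to a boundary term of smaller order, the region $A\cap\calS$.

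Next I would let $|A|\to\infty$ and apply the ergodic theorem for marked point processes, exactly as invoked for~(\ref{e.TypicalTraffic-general}). The numerator, divided by $|A|$, converges $\Pro$-a.s. to $\lambda\,\Ex^0[N(0)\ind\{\theta(0)<1\}]=\lambda\barN$, while the denominator gives $|A\cap\calS|/|A|\to\pi^\calS=\Ex[\ind\{0\in\calS\}]$. Dividing $1/\mu$ by the limiting mean call duration $\lambda\barN/(\gamma\pi^\calS)$ and using $\rho=\gamma/\mu$ together with $\bar\rho=\rho/\lambda$ (Proposition~\ref{p.TypicalDemand}) yields $\bar r=\rho\pi^\calS/(\lambda\barN)=(\bar\rho/\barN)\,\pi^\calS$, the first identity.

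For the per-tier statement I would repeat the same bookkeeping restricted to cells of type $j$: the numerator now averages to $\lambda_j\barN_j$, and the denominator to $|A\cap\calS_j|/|A|\to\Ex[\ind\{0\in\calS_j\}]=(a_j/a)\,\pi^\calS_j$, using the relation $\pi^\calS_j=(a/a_j)\Ex[\ind\{0\in\calS_j\}]$ stated just before the proposition. Substituting and invoking $\rho\,(a_j/a)/\lambda_j=\bar\rho_j$ (which follows from the computation in the proof of Proposition~\ref{p.TypicalDemand}, where $\bar\rho_j=\bar\rho\,(a_j/a)/(\lambda_j/\lambda)$) gives $\bar r_j=(\bar\rho_j/\barN_j)\,\pi^\calS_j$.

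The hard part will be the rigorous justification of the spatial Little's law: showing that aggregating the independent per-cell processor-sharing queues and passing to $|A|\to\infty$ legitimately interchanges the temporal (steady-state) and spatial (ergodic) averages, and that cells straddling $\partial A$ contribute a vanishing fraction. Restricting to the stable part $\calS$ is what makes this sound, since on $\calS$ every $N(X)$ is finite, each cell's queue is positive recurrent so the cell-level identity $N(X)=\gamma|V(X)|\,W(X)$ holds, and all the spatial averages above are finite; the unstable cells are simply excluded, consistently with the definitions of $\bar r$, $\barN$ and $\pi^\calS$.
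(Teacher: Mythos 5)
Your proposal is correct and follows essentially the same route as the paper's own proof: a spatial Little's law over the stable part of the network, the ergodic theorem as $\left\vert A\right\vert\rightarrow\infty$, and the Palm/volume-fraction identities. The only cosmetic difference is that you apply Little's law cell-by-cell and identify the cell-area sum with the volume fraction $\Ex[\ind\{0\in\mathcal{S}_j\}]=(a_j/a)\,\pi^\calS_j$ directly, whereas the paper applies Little's law to the aggregated region $W_j$ and then invokes the inverse formula of Palm calculus to arrive at the same quantity.
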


\begin{proof}
Let $W_{j}=\bigcup_{X\in A\cap\mathcal{S}_{j}}V\left(  X\right)  $. Consider
call arrivals and departures to $W_{j}$. 
By Little's law
\[
N^{W_{j}}=\gamma\left\vert W_{j}\right\vert T^{W_{j}}\,,%
\]
where $T^{W_{j}}$ is the mean call duration in $W_{j}$
and $N^{W_{j}}$ is the steady-state mean number of users in $W_{j}$. 
Thus mean user throughput, with users restricted to  $W_{j}$, equals
\begin{align*}
\frac{1/\mu}{T^{W_{j}}}  &  =\frac{\rho\left\vert W_{j}\right\vert }{N^{W_{j}%
}}
  =\rho\frac{\sum_{X\in A\cap\Phi}\left\vert V\left(  X\right)  \right\vert
\ind\left\{  \theta\left(  X\right)  <1,X\in\Phi_{j}\right\}  }{\sum_{X\in
A\cap\Phi}N\left(  X\right)  \ind\left\{  \theta\left(  X\right)  <1,X\in\Phi
_{j}\right\}  }\,.%
\end{align*}
Letting $\left\vert A\right\vert \rightarrow\infty$, it follows from the ergodic
theorem that
\[
\bar{r}_{j}=\rho\frac{\Ex^{0}\left[  \left\vert V\left(  0\right)
\right\vert \ind \left\{  \theta\left(  0\right)  <1,Z_0=j\right\}
\right]  }{\Ex^{0}\left[  N\left(  0\right)  \ind \left\{  \theta\left(
0\right)  <1,Z_0=j\right\}  \right]  }\,.%
\]
By the inverse formula of Palm calculus 
\[
\Ex^{0}\left[  \left\vert V\left(  0\right)  \right\vert \ind \left\{
\theta\left(  0\right)  <1,Z_0=j\right\}  \right]  =\frac{1}{\lambda
}\mathbb{P}\left(  \theta\left(  X^{\ast}\right)  <1,Z^{\ast}=j\right)
\]
and consequently
\begin{align*}
\bar{r}_{j}  & =\frac{\rho}{\lambda}\frac{\mathbb{P}\left(  \theta\left(
X^{\ast}\right)  <1,Z^{\ast}=j\right)  }{\mathbb{P}^{0}\left(
Z_0=j\right)  \bar{N}_{j}}\\
&=\frac{\bar{\rho}_{j}}{\bar{N}_{j}}\mathbb{P}\left(  \theta\left(  X^{\ast
}\right)  <1|Z^{\ast}=j\right)
=\frac{\bar{\rho}_{j}}{\bar{N}_{j}}\pi_{j}\,.%
\end{align*}
The expression for $\bar{r}$\ may be proved in the same lines as above.
\end{proof}

The mean  number of users per stable cell  and  the
mean user  throughput in the stable part of the network do not
admit explicit analytic expressions. We calculate these expressions
by Monte-Carlo simulation of the respective expectations with respect
to the distribution of the Poisson network model. We call this
semi-analytic approach the  {\em typical cell} approach.

\subsection{Mean cell approach}
\label{ss.MeanCell}
We will propose now a  more heuristic approach, in which we try to capture
the performance of the heterogeneous network considering $J$ simple
M/G/1 processor sharing queues related to each other via their cell
loads, which solve a simplified version of the cell load equation.

Recall  that in the original  approach, in the cell load fixed point equation~(\ref{e.FixedPoint})
we have an unknown cell loads $\theta(X)$ for each cell of the
network.  Recall
also that knowing all these cell loads and the cell traffic demands
(which depend directly on the cell surfaces)  we can
calculate all other cell characteristics. We will consider now a
simpler ``mean'' cell load fixed point equation in which all cells of
a given type $j=1,\ldots,J$  share the same constant
unknown $\tilde\theta_j$.~\footnote{Recall that~(\ref{e.FixedPoint}) is already a
  simplification of the reality in which the extra cell interference
  should be wighted by the dynamic (evolving in time) factors capturing
  cells' activity.} 
Specifically, in analogy to~(\ref{e.LoadPerType}), we assume that the new
unknowns~$\tilde\theta_j$ are related to each other by  
\begin{align}
\tilde\theta_{j}&=\tilde\theta\frac{P_{j}^{2/\beta}}{P^{2/\beta}},\quad j=1,\ldots,J\, \label{e.TildeThetaj}%
\end{align}
where $P$ is given by~(\ref{e.Power}) and  $\tilde\theta$ solves the following equation
\begin{equation}
\tilde{\theta}=\frac{\rho}{\lambda}\Ex\left[  R^{-1}\left(  \frac
{\frac{1}{L_{X^*}\left(  0\right)  }}{N+\tilde\theta \sum_{j=1}^J\frac{P_{j}^{2/\beta}}{P^{2/\beta}}
\sum_{Y\in\Phi_{j}\setminus\{X^*\} }\frac{1}{L_{Y}\left(
0\right)  }}\right)  \right]\, \label{e.meanFixedPoint}
\end{equation}
The mean fixed point cell load equations boils down hence to an equation in one variable~$\tilde\theta$.
Note that the argument of $R^{-1}$ in~(\ref{e.meanFixedPoint}) is a
functional of the marked path-loss process $\hat\Phi$ and thus 
the expectation in this expression can be evaluated using the equivalent homogeneous
model described in Remark~\ref{r.equivalence}.

By the {\em mean cell of type} $j=1,\ldots,J$ we understand a (virtual) processing
sharing queue with the traffic demand
$$\tilde{\rho}_j:=\bar{\rho}_j=\frac{\rho}{\lambda}\frac{P_j^{2/\beta}}{P^{2/\beta}}\, $$
and the traffic load $\tilde\theta_j$ given by~(\ref{e.TildeThetaj}),
where $\tilde\theta$ is the solution of~(\ref{e.meanFixedPoint}).
The remaining mean cell characteristics (the critical load, user
throughput  and the number of users) are related to these two ``primary''
characteristics in analogy to~(\ref{e.CriticalTraffic2}),
(\ref{e.UserThroughput}) and (\ref{e.UsersNumber}) via 
\begin{align}\label{e.menCriticalTraffic2}
\tilde{\rho_c}_j&:=\frac{\tilde\rho_j}{\tilde\theta_j}\,,\\
\tilde r_j&:=\max(\tilde{\rho_{\mathrm{c}}}_j  -\tilde\rho_j  ,0) \label{e.menUserThroughput}\,,\\%
\tilde N_j&:=\frac{\tilde \rho_j  }{\tilde r_j  }\,,
\label{e.meanUsersNumber}%
\end{align}
$j=1,\ldots,J$.

We will also consider a (global) {\em mean cell} having, respectively,
the traffic demand and cell load given by 
$\tilde{\rho}:=\bar{\rho}=\frac{\rho}{\lambda}$ and $\tilde\theta$,
and the remaining characteristics $\tilde\rho_c,\tilde r,\tilde N$
given, respectively  by~(\ref{e.menCriticalTraffic2}),
(\ref{e.menUserThroughput}) and (\ref{e.meanUsersNumber}) where 
the subscript $j$ is dropped.

In the next section we shall evaluate the mean cell approximation (both globally and per type)
by comparison to the characteristics of the typical cell obtained both from
simulation and from real field measurements.

\section{Numerical results and model validation}
\label{s.NumericalResults}

In this section we present numerical results of the analysis of our
model and compare them to the corresponding statistics obtained from
some real field measurements. We show that the  obtained results match
the real field measurements.
Our numerical assumptions, to be presented in Section~\ref{sss.NumercialAssumptions},
correspond to an operational network in some big city in Europe in
which two types of BS can be distinguished, conventionally called 
macro and micro base stations.~\footnote{Let us explain what we mean
here  by macro and micro BS:
Historically, the operator deployed first what we call here macro BS. 
Powers of these stations slightly vary around some
mean value as a consequence of some local adaptations. We assume
them constant. 
In order to cope with the increase of the traffic demand, new stations are 
added progressively. These new stations, which we call micro BS, emit with 
the power about 10 times smaller than the macro
BS. Figure~\ref{f.PowerCDF} shows the cumulative distribution
function (CDF) of the antenna powers (without antenna gains).}

\begin{figure}[h]
\begin{center}
\includegraphics[width=0.8\linewidth, height=0.5\linewidth]%
{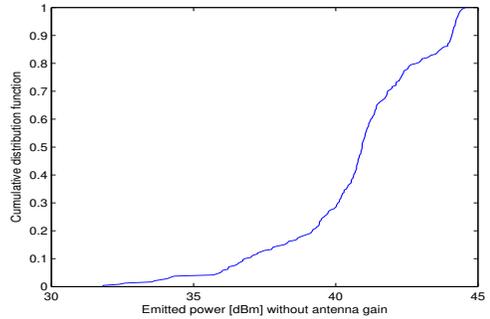}%
\vspace{-2ex}
\caption{CDF  of the emitted antenna powers, without antenna gains, in the
 network consisting of micro (power $\le35$dBm) and macro BS (power
 $\ge 35$dBm). The average micro BS power is $33.42$dBm and the
 average macro BS power is  $41.26$dBm. Adding antenna gains, which  are equal respectively $14$dBm and
 $17$dBm for micro and macro BS, we obtain
 $P_{2}=47.42$dBm,  $P_{1}=58.26$dBm.
\label{f.PowerCDF}}%
\end{center}
\vspace{-3ex}
\end{figure}

 The real field measurements are obtained using a methodology described in Section~\ref{sss.Measurements}.
In Section~\ref{ss.Results} the statistics obtained from these
measuremetns will be compared to the performance of each category of
BS calculated using the approach proposed in the present paper.

\subsection{Model specification}

\subsubsection{Measurements} 
\label{sss.Measurements}
The raw data are collected
using a specialized tool used by network maintenance engineers.
This tool has an interface allowing to fetch values of  several
parameters for  every base
station 24 hours a day, at a time scale of one hour. For a given day, for every hour, we obtain information 
regarding the BS coordinates, type, power, traffic demand, number of users
 and cell load calculated as  the percentage of the Transmission Time
Intervals (TTI) scheduled for transmissions. Then we estimate the global cell performance
metrics for the given hour averaging over time (this hour) and next
over all considered network cells. The mean user throughput is calculated as the
ratio of the mean traffic demand to the mean number of users. 
The mean traffic demand $\rho$ is also used as the input of our model. 
Knowing all  cell coordinates, their  types and the surface of the
deployment region  we deduce the network density and fraction of BS in
the two tiers.

\subsubsection{Numerical assumptions}
\label{sss.NumercialAssumptions}
The BS locations are generated as a realization of a
Poisson point process of intensity $\lambda=\lambda_{1}+\lambda_{2}%
=4.62$km$^{-2}$ (which corresponds to an average distance between two base
stations of $0.5$km) over a sufficiently large observation window which is
taken to be the disc of radius $2.63$km. The ratio of the micro to macro BS
intensities equals  $\lambda_{2}/\lambda_{1}=0.039$.
The transmitted powers by macro and micro BS are $P_{1}=58.26$%
dBm,\ $P_{2}=47.42$dBm, respectively. The power~(\ref{e.Power}) of the ``equivalent''
homogeneous network is $P=58.03$dBm.
The propagation loss due to distance is $l(x)=(K\left\vert x\right\vert
)^{\beta}$ where $K=7117$km$^{-1}$ and the path loss exponent $\beta=3.8$.
Shadowing is assumed log-normally distributed with standard deviation
$\sigma=10$dB and spatial correlation $0.05$km.
The technology is HSDPA (High-Speed Downlink Packet Access) with MMSE (Minimum
Mean Square Error) receiver in the downlink. The peak bit-rate equals to
$30\%$ of the information theoretic capacity of the Rayleigh fading channel
with AWGN; that is
\begin{equation}\label{e.Shannon}
R\left(  \mathrm{SINR}\right)  =0.3W\Ex\left[  \log_{2}\left(  1+\left\vert
H\right\vert ^{2}\mathrm{SINR}\right)  \right]
\end{equation}
where the expectation $\Ex\left[  \cdot\right]  $ is with respect to the
Rayleigh fading $H$\ of mean power $\Ex[\left\vert H\right\vert ^{2}]=1$, and
$W=5$MHz\ is the frequency bandwidth.
A fraction $\epsilon=10\%$ of the transmitted power is used by the pilot
channel (which is always transmitted whether the BS serves users or
not).%
~\footnote{It is taken into account by replacing
$\min(\theta(Y),1)$ in~(\ref{e.FixedPoint}) by $\min(\theta(Y),1)(1-\epsilon)
+\epsilon$. Similar modification concerns $\tilde{\theta}$ in the right-hand-side
of~(\ref{e.meanFixedPoint}).}
The
antenna pattern is described in~\cite[Table A.2.1.1-2]{3GPP36814-900}. The
noise power is $-96$dBm.

\subsection{Results}
\label{ss.Results}
\exclude{
\textcolor{blue}{We consider first the full interference model (i.e. each BS always transmits
at its maximal power even when it has no user to serve). This model is
analyzed by simulation with the typical cell approach or analytically with the
mean cell approach (no measurements are available in the full interference case).
Figure~\ref{f.FullInterference_Load} shows the mean cell load of the typical
cell $\bar{\theta}$ and the stable fraction of the network $\pi$ obtained from
simulations, as well as the load of the mean cell $\tilde{\theta}$ calculated
analytically, versus mean traffic demand per cell $\rho/\lambda$. This figure
confirms that the typical cell and the mean cell models have the same load
both globally and for each category of BS.
Figure~\ref{f.FullInterference_UsersNumber} shows the mean number of users per
cell $\bar{N}$ (obtained from simulations) and the analytically calculated
number of users in the mean cell $\tilde{N}$ versus mean traffic demand per
cell. Again the mean cell reproduces well the results of the typical cell at
least for moderate traffic demands; i.e. as long as the stable fraction of the
network $\pi$\ remains close to $1$ as it may be seen in
Figure~\ref{f.FullInterference_Load}.
Finally, Figure~\ref{f.FullInterference_UserThroughput} presents the
dependence of the mean user throughput in the network on the mean traffic
demand per cell obtained using the two approaches: $\bar{r}$ and for the
typical cell and $\tilde{r}$ for the mean cell. Observe again for moderate
traffic demans the good fit between the mean and typical cells both globally
and for each BS category.}

}

We present now the results obtained form the analysis of our two-tier Poisson
 model conformal to a given region of the operational network, adopting both the typical cell approach described in
Section~\ref{ss.TypicalCell} and the mean cell approach explained in Section~\ref{ss.MeanCell}. 
The obtained results are compared to the respective quantities
estimated in the given operational network.
Error bars on all figures represent the standard deviation in the
averaging over 10 realizations of the Poisson network in the
Monte-Carlo estimation of the respective expectations.

Figure~\ref{f.PonderedInterference_Load} shows the mean cell load  together with the stable fraction of
    the network, both globally and separately for the two tiers, as functions of
the mean traffic demand per cell $\bar\rho=\rho/\lambda$. 
The mean cell load is calculated using the two  approaches: the
typical cell and the mean cell one. The stable fraction of the network
is available only in the typical cell approach. 
Figure~\ref{f.PonderedInterference_Load} presents also 24 points
presenting the mean cell load estimated from the real field measurements done during 24 different hours of some given day.
Note a good fit of our results and the network measurements.
Observe also that all real field measurements fall within the range
of the traffic demand ($\bar\rho\le 600$kbps) for which the stable fraction of the network for
both network tiers is very close to~1. This is of course a consequence of
a good dimensioning of the network. Interestingly these latter metrics
allows us to reveal  existing dimensioning margins. Specifically, we
predict that there will be no unstable macro cells with the traffic
demand slightly less than $\bar\rho\le 700$kbps and the micro cells remain stable for much
higher traffic demand of order  $\bar\rho\approx1000$kbps.

We move now to the mean number of users per cell presented on
Figure~\ref{f.PonderedInterference_UsersNumber}, again,  as function of
the mean traffic demand per cell $\bar\rho=\rho/\lambda$. 
Both approaches (typical and mean cell) are adopted and the two
network tiers are analyzed jointly and separately.
As for the load, we  present also 24 points corresponding the network
measurements. Note a good fit of our model results and the network
measurements.
Note also that the prediction of the model performance  for
the traffic demand $\bar\rho\ge 700$kbps , where the fraction of
unstable cell is non-negligible (cf
Figure~\ref{f.PonderedInterference_Load}), is much more
volatile. More precisely, the relatively large error-bars
of the mean number of users for $\bar\rho\ge 700$kbps can be explained
by a non-negligible probability of finding a cell whose load is
just below~1. It still contributes to the calculation of the mean
number of users (as we remove only strictly unstable cells) and makes
the empirical mean very large for this simulation experiment.

Finally, Figure~\ref{f.PonderedInterference_UserThroughput} shows the
relation between  the  mean user throughput and 
the mean traffic demand per cell $\bar\rho$ obtained via the two modeling
approaches and real field measurements. The global performance of the
network and its macro-tier  are quite well captured by our two
modeling approaches. The micro-tier analysis via the typical cell and the real
field measurements exhibit important volatility due to 
a relatively small number of such cells in the network.
The mean cell model allows to predict however a macroscopic law in
this regard.

\exclude{
\begin{figure}
[p]
\begin{center}
\includegraphics[width=1\linewidth
]%
{SimulationsPIM/FullInterference_Load}%
\caption{Cell load versus traffic demand per cell in the full interference
model.}%
\label{f.FullInterference_Load}%
\end{center}
\end{figure}
%

\begin{figure}
[p]
\begin{center}
\includegraphics[
height=2.4735in,
width=3.5276in
]%
{SimulationsPIM/FullInterference_UsersNumber}%
\caption{Number of users per cell versus traffic demand per cell in the full
interference model.}%
\label{f.FullInterference_UsersNumber}%
\end{center}
\end{figure}
%

\begin{figure}
[ph]
\begin{center}
\includegraphics[
height=2.4735in,
width=3.5276in
]%
{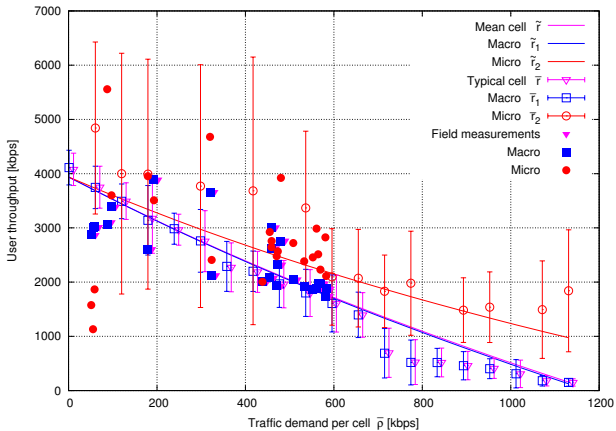}%
\caption{Mean user throughput in the network versus traffic demand per cell in
the full interference model.}%
\label{f.FullInterference_UserThroughput}%
\end{center}
\end{figure}
%
}

\begin{figure}[t!]
\begin{center}
\hspace{-1em}%
\includegraphics[width=0.9\linewidth
]%
{SimulationsPIM/PonderedInterference_LoadBB}%
\vspace{-2ex}
\caption{Cell load versus traffic demand per cell.}%
\label{f.PonderedInterference_Load}%
\end{center}
%

\begin{center}
\hspace{-1em}%
\includegraphics[width=0.9\linewidth
]%
{SimulationsPIM/PonderedInterference_UsersNumberBB}%
\vspace{-2ex}
\caption{Number of users per cell versus traffic demand per cell.}%
\label{f.PonderedInterference_UsersNumber}%
\end{center}
%

\begin{center}
\hspace{-1em}%
\includegraphics[width=0.9\linewidth
]%
{SimulationsPIM/PonderedInterference_UserThroughputBB}%
\vspace{-2ex}
\caption{Mean user throughput in the network versus traffic demand per cell.}%
\label{f.PonderedInterference_UserThroughput}%
\end{center}
\vspace{-5ex}
\end{figure}

\section{Conclusions}
A heterogeneous  cellular network model  allowing for different
BS types (having different transmission powers) is proposed, aiming to help in 
performance evaluation and dimensioning of real (large, irregular) operation networks.
It allows one to identify  key laws relating the performance of the different base station
types. In particular, we show how  the mean load of different types of
BS is related in a simple way to their  transmission powers. 
The results of the model analysis are compared to
real field measurement in an operational network showing its pertinence.

\addtocounter{section}{1}
\addcontentsline{toc}{section}{References} 
{\small
\bibliographystyle{IEEEtran}

}
\end{document}